\documentclass[11pt,a4paper]{article}
\usepackage{jheppub} % for details on the use of the package, please see the JINST-author-manual
\usepackage{graphicx}
\usepackage{bm}
\usepackage{mathtools}
\usepackage{slashed}
\usepackage{booktabs}
\usepackage{amsthm}
\usepackage[T1]{fontenc}
\usepackage[utf8]{inputenc}
\usepackage{microtype}
\usepackage{hyperref}
\usepackage{xcolor}
\usepackage{ulem}

\newcommand{\mpl}{M_{\rm P}}
\newcommand{\dd}{\mathrm{d}}
\newcommand{\ii}{\mathrm{i}}
\newcommand{\ee}{\mathrm{e}}
\newcommand{\RtildeR}{R\widetilde R}
\newcommand{\essinf}{\operatorname*{ess\,inf}}

\newcommand{\cO}{\mathcal O}
\newcommand{\cI}{\mathcal I}
\newcommand{\cJ}{\mathcal J}
\newcommand{\cD}{\mathcal D}
\newcommand{\cM}{\mathcal M}
\newcommand{\cG}{\mathcal G}
\newcommand{\cA}{\mathcal A}
\newcommand{\cC}{\mathcal C}

\newcommand{\cQ}{\mathcal Q}

\newtheorem{proposition}{Proposition}
\newtheorem{corollary}{Corollary}

\title{\boldmath Baby-universe state dependence of the wormhole-induced gravi-axion mass gap}

\author{Amin Rezaei Akbarieh}
\affiliation{Department of Physics, Kocaeli University, 41001 Izmit, T\"urkiye}

\emailAdd{amin.rezaeiakbarieh@kocaeli.edu.tr}

\abstract{Euclidean wormholes can break the axion shift symmetry, and when this symmetry is broken, a periodic potential is created for the axion, and the curvature of this potential at a minimum point can give the axion mass. In this work, we take seriously the assumption that the mere existence of a wormhole does not mean that the axion mass is uniquely determined, and that more information is needed to determine it. We first show that since for a Giddings--Strominger wormhole with symmetry \(O(4)\), the gravitational Pontryagin density is zero, the dynamical Chern--Simons interaction can affect quantum effects, including the determinant phase and fermionic selection rules, without changing the classical wormhole action. We also conclude that in a fixed \(\alpha\) sector, the axion can have a specific mass, but this mass depends on the baby-universe state and is not a universal value. Then, for the mixed case of \(\alpha\) sectors that give a distribution of conditional masses, we show that the lower edge of the averaged spectral measure is determined by the essential infimum of the conditional mass squared over the $\alpha$ distribution. Finally, we conclude that even if almost all individual sectors are massive and the mean mass squared is also non-zero, if the support of the measure accumulates at the symmetric point \(\alpha=0\), the averaged spectral measure will be gapless. To clarify the discussion, we consider a clear example of this situation, which can be a Gaussian baby-universe state, and show that in it the spectral threshold starts at zero and there is no isolated massive pole. Therefore, predicting the mass and cosmology of a gravi-axion from a wormhole without specifying the baby-universe state is not a unique, state-independent prediction. }

\begin{document}
\maketitle
\flushbottom

\section{Introduction}
\label{sec:introduction}
A periodic pseudoscalar field \(\phi\) can be defined in terms of a dimensionless angular variable \(\theta\) with $\theta\sim\theta+2\pi$ as $\phi=\mu\theta $ where \(\mu\) is the characteristic scale of the field. This pseudoscalar field can be coupled to the gravitational Pontryagin density
\begin{equation}
\RtildeR
\equiv
\frac{1}{2}\varepsilon^{\mu\nu\rho\sigma}
R^{\alpha}{}_{\beta\mu\nu}
R^{\beta}{}_{\alpha\rho\sigma}
\label{eq:pontryagin-definition}
\end{equation}
which is odd parity. At low energies, this coupling constitutes the characteristic interaction of dynamical Chern-Simons gravity \cite{JackiwPi2003,AlexanderYunes2009}, and also, it can be induced in the presence of massive fermions whose mass depends on a complex Peccei-Quinn field \cite{DelbourgoSalam1972,Fujikawa1979,Fujikawa1980,AlexanderCreque2023}. In this case, the axion phase appears in the fermion mass term. By performing a chiral rotation, this phase can be removed from the mass term, but due to the gravitational chiral anomaly, a coupling between the axion field and the Pontryagin density \(\RtildeR\) is instead induced. It is worth noting that this anomaly determines the coefficient of this coupling, but it does not generate a local perturbative potential for a constant $\theta$ on a topologically trivial background. Therefore, if the axion shift symmetry is to be broken and a mass is to be generated for it, non-perturbative gravitational effects must also be considered. One of the mechanisms proposed in this context is Euclidean axion wormholes \cite{GiddingsStrominger1988,KalloshLindeLindeSusskind1995,HebeckerReview2018,AlonsoUrbano2019}. In many phenomenological studies, the effect of these wormholes is represented by a potential of the form $\Lambda_{\rm wh}^{4}\cos\theta$ and, from the curvature of this potential, a mass of the order of $m_\phi^2\sim\Lambda_{\rm wh}^{4}/\mu^2$ is obtained for the axion. Such a description is clear and unambiguous when the local coefficient of the potential is fixed within a specified theory; This same picture has also been the basis for some recent studies of gravi-axion dark matter and dark energy, its gravitational production and decay into gravitons \cite{AlexanderEtAl2025}. However, the calculation of a connected wormhole saddle does not necessarily mean that a unique local coefficient for such a potential has been obtained; this is a point that needs to be carefully considered before ascribing a specific mass to the axion.

A connected wormhole has two mouths in asymptotic regions, in such a way that at distances much larger than the throat radius, the main wormhole effect appears as a bilocal interaction; That is, at each mouth, a charged operator is inserted. Using the Hubbard-Stratonovich transformation, this bilocal interaction can be rewritten into a family of local actions, each of which is characterized by several auxiliary variables. In the interpretation of baby universes, these variables are considered as eigenvalues of the commuting baby-universe operators and are usually denoted by parameters $\alpha$ \cite{Coleman1988CC,MarolfMaxfield2020}. However, the Gaussian weight that appears in this mathematical rewrite does not alone determine the physical density matrix of the baby-universe; rather, to specify this state, we need more information about the state or density matrix of the baby-universe. For this reason, the view that considers the Hilbert space of the baby-universe to be practically one-dimensional is conceptually not the same as the view that relies on an ensemble of different parts \cite{McNamaraVafa2020}.

This difference between choosing a fixed \(\alpha\)-sector and considering a mixed state over different \(\alpha\) sectors becomes more important when we want to talk about the mass of the axion. If a specific $\alpha$-sector is chosen, the local potential will also be known, and its curvature at a locally stable minimum defines a conditional mass for the axion in that sector. But if the baby universe state is not limited to just one sector and has support over different values of \(\alpha\), we can no longer talk about a single mass without further explanation; in this case we are dealing with a distribution of conditional masses and the connected correlators in the different sectors can be averaged with respect to the state distribution. The important point is that even if the mean mass squared is nonzero, this alone does not lead to the conclusion that the propagator has an isolated particle pole or that the spectral measure has a nonzero lower threshold. What is decisive here is how the state is distributed near the regions where the local coefficient of the potential can reach very small values. In this connection, it has recently been suggested in a Lorentzian thermodynamic framework that for a class of Peccei--Quinn models the integral over $\alpha$ can be dominated by the symmetric point \cite{Kawana2026}. The question we pursue in this paper is slightly different: if we do not assume from the outset that the state is concentrated exactly at a critical point, what can be said about the spectral structure of a general positive state whose support extends to the vicinity of that point?

The present paper makes this statement explicit in a minimal gravi-axion setting.  First, we reconstruct the constant-modulus Giddings--Strominger saddle in the two-form formulation and keep the full one-loop information in a determinant-corrected bilocal coefficient.  The scalar representation is introduced only after the flux sum, in accordance with the Euclidean scalar--two-form duality and its Poisson resummation \cite{Witten2026}.  Second, we show that the $O(4)$ saddle has $\RtildeR=0$.  Thus the mixed gravitational anomaly does not alter the classical exponent, although fermion zero modes, determinant moduli, and spectral phases may alter or eliminate a bosonic mouth operator.  Third, we distinguish the Hubbard--Stratonovich measure from a positive spectral measure supplied by a baby-universe state.  Finally, within a common-background quadratic approximation, we derive the lower edge of the averaged spectral measure and evaluate it analytically for a Gaussian state.

The central result is conditional but precise.  If $d\nu_{\rm eff}(\bm\alpha)$ is a positive normalized measure, if sector-wise connected two-point functions admit a common Euclidean momentum representation, and if their quadratic poles have positive residues, then
\begin{equation}
 m_{{\rm gap},{\rm ens}}^2
 =
 \essinf_{\bm\alpha\sim\nu_{\rm eff}}
 m_{\phi,\bm\alpha}^2.
 \label{eq:intro-gap-result}
\end{equation}
For the leading harmonic, $m_{\phi,\alpha}^{2}\to0$ as $\alpha\to0$.  Hence any positive measure whose support accumulates at the symmetric point is gapless.  This remains true when the set $\alpha=0$ itself has zero probability and almost every sector is massive.

The exact determinant of the coupled gravity--two-form--fermion system is not presently needed for this conclusion.  We therefore do not replace an unknown one-loop quantity by an assumed order-one number.  Instead, the determinant is parameterized in a way that records its magnitude, phase, zero-mode selection rule, and charge dependence.  These data set the conditional mass scale but do not remove its state dependence.

The paper is organized as follows.  Section~\ref{sec:mouths} constructs the gravi-axion wormhole mouth operators.  Section~\ref{sec:bilocal} derives the bilocal interaction and its localization in $\alpha$ sectors.  Section~\ref{sec:gap} defines the conditional mass, proves Eq.~\eqref{eq:intro-gap-result}, and gives a closed-form Gaussian example.  Section~\ref{sec:conclusions} discusses the physical interpretation and limitations.  Details of Euclidean duality, the on-shell action, and the Gaussian spectral integral are collected in the appendices.

\section{Gravi-axion wormhole mouth operators}
\label{sec:mouths}

\subsection{Minimal ultraviolet origin and anomaly matching}
\label{subsec:uv-anomaly}

A minimal weakly coupled progenitor contains a complex scalar $\Phi$ and a Dirac fermion with chiral Peccei--Quinn charges,
\begin{align}
 \mathcal L_{\rm UV}
 \supset{}&
 \nabla_\mu\Phi^\dagger\nabla^\mu\Phi
 -\lambda\left(\Phi^\dagger\Phi-\frac{\mu^2}{2}\right)^2
 +\overline\Psi\,\ii\slashed D\Psi
 \nonumber\\
 &-y\left(\overline\Psi_L\Phi\Psi_R+\mathrm{H.c.}\right).
 \label{eq:uv-minimal}
\end{align}
Here $\mathcal L_{\rm UV}$ denotes the ultraviolet Lagrangian density, $\lambda$ is the scalar self-coupling, $\mu$ is the Peccei--Quinn symmetry-breaking scale, $y$ is the Yukawa coupling, and $\Psi_L$ and $\Psi_R$ are the left- and right-handed components of $\Psi$, respectively. In the broken phase, one can obtain
\begin{equation}
 \Phi=\frac{\mu+\rho}{\sqrt2}\,\ee^{\ii\theta},
 \qquad
 \theta=\frac{\phi}{\mu},
 \label{eq:polar-field}
\end{equation}
where $\rho$ is the radial scalar mode, $\theta$ is the dimensionless angular field, and $\phi$ is the corresponding canonically normalized pseudoscalar field. The masses of the radial mode and the Dirac fermion are denoted by $M_\rho$ and $M_\Psi$, respectively, and are given by
\begin{equation}
 M_\rho^2=2\lambda\mu^2,
 \qquad
 M_\Psi=\frac{y\mu}{\sqrt2},
 \label{eq:uv-masses}
\end{equation}
and the Yukawa interaction becomes $-M_\Psi(1+\rho/\mu)\overline\Psi\ee^{\ii\gamma_5\theta}\Psi$.
A local chiral rotation removes this phase from the mass operator, yielding the Fujikawa Jacobian.  With the convention in Eq.~\eqref{eq:pontryagin-definition}, we write the resulting Lorentzian Wess--Zumino term as
\begin{equation}
 S_{\rm WZ}^{(L)}
 =
 \frac{\cA_{\rm grav}}{384\pi^2}
 \int \dd^4x\sqrt{-g}\,\theta\RtildeR
 =
 \int \dd^4x\sqrt{-g}\,
 \frac{\phi}{4f_{\rm CS}}\RtildeR,
 \label{eq:wz-term}
\end{equation}
where
\begin{equation}
 f_{\rm CS}=\frac{96\pi^2}{\cA_{\rm grav}}\,\mu,
 \label{eq:fcs-relation}
\end{equation}
and $\cA_{\rm grav}$ is the mixed Peccei--Quinn--gravitational anomaly coefficient, while $f_{\rm CS}$ denotes the effective dynamical Chern--Simons coupling scale. For the minimal charge assignment, in which left- and right-handed fermions differ by one unit of Peccei--Quinn charge, the gravitational anomaly coefficient is $\mathcal{A}_{\rm grav}=1$. The overall sign of Eq.~\eqref{eq:wz-term} depends on convention choices, such as the orientation and the definition of $\gamma_5$, but none of our conclusions depend on this sign. The parity-even terms in the fermion determinant renormalize the kinetic term and higher-derivative operators, but after canonical normalization they do not generate a nonderivative perturbative potential for $\theta$. At the same time, the radial mode need not remain frozen in the wormhole throat, since in the two-form frame its kinetic function scales as $(\mu+\rho)^{-2}$ and a nonzero flux can therefore source $\rho$. Here the two-form frame refers to the dual formulation of the angular field in terms of a two-form gauge field.  The constant-modulus saddle used below is therefore a controlled reference solution when the radial response and higher-curvature corrections are small; in a general ultraviolet completion, the fully backreacted profile changes the classical action and prefactor \cite{CheongParkShin2024,CatinariUrbano2025}.  Our state-dependence result will be expressed in terms of the resulting mouth coefficient and is independent of this model-dependent replacement.

\subsection{Two-form saddle and quantized flux}
\label{subsec:gs-saddle}

At fixed modulus, corresponding to $\rho=0$, the Euclidean two-form action is
\begin{align}
 S_E[B,g]
 &={}
 \int_{\mathcal M}
 \left[
 -\frac{\mpl^2}{16\pi}R\star1
 +\frac{1}{2\mu^2}H_3\wedge\star H_3
 \right]
 +S_{\rm GHY},
 \nonumber\\
 H_3&=\dd B_2.
 \label{eq:two-form-action}
\end{align}
Here $\mathcal M$ denotes the Euclidean manifold, $g$ its metric, $B_2$ the two-form gauge potential, $H_3$ its three-form field strength, $\star$ the Hodge dual, $R$ the Euclidean Ricci scalar, and $S_{\rm GHY}$ the Gibbons--Hawking--York boundary term, and we use the nonreduced Planck mass as $\mpl=G^{-1/2}$. Our three-form is integer normalized,
\begin{equation}
 \int_{S^3}H_3=n,
 \qquad n\in\mathbb Z,
 \label{eq:flux-quantization}
\end{equation}
where the integer $n$ labels the quantized three-form flux and the corresponding wormhole charge sector. Equivalently, the convention with periods $2\pi\mathbb Z$ is
$H_3^{(2\pi)}=2\pi H_3$.  The duality to the compact scalar and the associated normalization are reviewed in Appendix~\ref{app:duality}.

For the $O(4)$ ansatz, we have
\begin{align}
 \dd s^2&=\dd\tau^2+a(\tau)^2\dd\Omega_3^2,
 \nonumber\\
 H_3&=\frac{n}{2\pi^2}\,\omega_3,
 \qquad
 \int_{S^3}\omega_3=2\pi^2,
 \label{eq:o4-ansatz}
\end{align}
where $\tau$ is the Euclidean radial coordinate and $a(\tau)$ is the radius of the three-sphere slices. Note that primes below denote derivatives with respect to $\tau$, and $\dd H_3=0$ and $\dd\star H_3=0$ hold identically. The independent Einstein equation gives
\begin{equation}
 (a')^2=1-\frac{L_n^4}{a^4},
 \qquad
 L_n^4=\frac{n^2}{3\pi^3\mpl^2\mu^2}.
 \label{eq:friedmann-wormhole}
\end{equation}
The two branches meet smoothly at $a=L_n$, where $a'=0$, forming a throat of radius $L_n$ that connects two asymptotically flat ends. Differentiating Eq.~\eqref{eq:friedmann-wormhole} gives
\begin{equation}
 a''=\frac{2L_n^4}{a^5},
 \qquad
 R=-\frac{6L_n^4}{a^6},
 \label{eq:wormhole-curvature}
\end{equation}
so the curvature is of order $L_n^{-2}$ at the throat. The leading Einstein--three-form action of the complete smooth two-ended saddle is
\begin{equation}
 S_{n,{\rm conn}}^{(0)}
 =
 \frac{3\pi^2}{4}\mpl^2L_n^2
 =
 \frac{\sqrt{3\pi}}{4}\,
 \frac{|n|\mpl}{\mu},
 \label{eq:connected-action}
\end{equation}
where the superscript $(0)$ denotes the leading classical saddle-point contribution, while the subscript ``conn'' refers to the connected two-ended geometry. The derivation, including the background-subtracted asymptotic Gibbons--Hawking--York term, is given in Appendix~\ref{app:action}.  If the geometry is cut at the throat and treated as a half-wormhole with a separate boundary variational problem, the internal boundary term changes the numerical coefficient.  In the conventions of Ref.~\cite{AlonsoUrbano2019}, one can write
\begin{equation}
 S_{n,{\rm cut}}^{(0)}
 =
 \frac{\sqrt{3\pi}}{8}\,
 \frac{|n|\mpl}{\mu}
 \left(1-\frac{2}{\pi}\right).
 \label{eq:cut-action}
\end{equation}
Equations~\eqref{eq:connected-action} and \eqref{eq:cut-action} correspond to different boundary and gluing prescriptions; the full smooth action is not twice the cut action.  We use the connected saddle to define the bilocal coefficient and do not assume a unique factorization of its internal-boundary contribution into two independent mouth actions.

The semiclassical description of the wormhole geometry is reliable when the throat radius is large in Planck units and the classical action is correspondingly large,
\begin{equation}
 L_n\mpl
 =
 \frac{1}{(3\pi^3)^{1/4}}
 \left(\frac{|n|\mpl}{\mu}\right)^{1/2}
 \gg1,
 \qquad
 S_{n,{\rm conn}}^{(0)}\gg1.
 \label{eq:semiclassical-domain}
\end{equation}
For the constant-modulus ultraviolet model, however, these conditions alone are not sufficient, since the masses of the radial mode and the fermion must also be compared with the inverse throat scale:
\begin{align}
 M_\rho L_n
 &=
 \frac{\sqrt{2\lambda}}{(3\pi^3)^{1/4}}
 \left(\frac{|n|\mu}{\mpl}\right)^{1/2},
 \nonumber\\
 M_\Psi L_n
 &=
 \frac{y}{\sqrt2(3\pi^3)^{1/4}}
 \left(\frac{|n|\mu}{\mpl}\right)^{1/2}.
 \label{eq:throat-mass-ratios}
\end{align}
Thus, a wormhole can lie well within the semiclassical gravitational regime while still having $M_\rho L_n\lesssim1$ or $M_\Psi L_n\lesssim1$; in such cases, the radial mode cannot consistently be treated as frozen, and a large-mass heat-kernel expansion of the fermion determinant is not controlled in the throat region.

\subsection{Anomaly, fluctuations, and the mouth coefficient}
\label{subsec:prefactor}

The metric in Eq.~\eqref{eq:o4-ansatz} describes a Euclidean Friedmann geometry and is therefore conformally flat. In four dimensions, the Pontryagin density can be written entirely in terms of the Weyl tensor as
\begin{equation}
 \RtildeR
 =
 C_{\mu\nu\rho\sigma}\widetilde C^{\mu\nu\rho\sigma},
 \label{eq:pontryagin-weyl}
\end{equation}
where $C_{\mu\nu\rho\sigma}$ is the Weyl tensor and
$\widetilde C^{\mu\nu\rho\sigma}$ denotes its Hodge dual. Since the Weyl tensor vanishes for a conformally flat geometry, the $O(4)$-symmetric wormhole background satisfies
\begin{equation}
 \left.\RtildeR\right|_{O(4)}=0.
 \label{eq:pontryagin-zero}
\end{equation}
As a result, the parity-odd interaction does not modify the three-form Bianchi identity in the first-order dual formulation. The same conclusion follows from varying the parity-odd interaction with respect to the metric. Within the $O(4)$ ansatz, the variation vanishes because no nonzero parity-odd symmetric rank-two tensor can be constructed solely from an $O(4)$-invariant metric and radial data. Consequently, the anomaly term does not modify either the $O(4)$-symmetric classical field equations or the connected wormhole action in Eq.~\eqref{eq:connected-action}. This conclusion, however, applies only to the spherical sector; parity-odd fluctuations, nonspherical saddle configurations, or parent geometries with a nonvanishing Pontryagin number may still acquire an anomaly-induced phase.

Let $\Delta_n$ denote the determinant-corrected coefficient multiplying a connected charge-$n$ bilocal insertion.  Schematically,
\begin{equation}
 \Delta_n
 =
 \ee^{-S_{n,{\rm conn}}^{(0)}}
 \cJ_n
 \cC_n
 \frac{\det{}'\cM_{{\rm gh},n}}
 {\sqrt{\det{}'\cM_{{\rm bos},n}}}
 \frac{\det\cQ_{E,n}}{\det\cQ_{E,0}}.
 \label{eq:bilocal-prefactor}
\end{equation}
Here $\cJ_n$ contains collective-coordinate Jacobians, $\cM_{{\rm bos},n}$ is the constrained quadratic operator for the metric, two-form, and any retained radial fluctuations, $\cM_{{\rm gh},n}$ includes diffeomorphism ghosts and the reducible two-form ghost complex, and $\cC_n$ records the conformal-factor contour and any associated phase.  Primes remove collective and exact zero modes.  Quantum scalar--two-form duality also carries a local Euler-density normalization and, on a noncompact manifold, its boundary completion.  This local, scheme-dependent factor is absorbed into the renormalized coefficient $\Delta_n$ together with the counterterms.  The Euclidean fermion operator is represented schematically by
\begin{equation}
 \cQ_E
 =
 \slashed D_E
 +M_\Psi\ee^{\ii\gamma_{5,E}\theta},
 \label{eq:fermion-operator}
\end{equation}
where $\cQ_{E,n}$ and $\cQ_{E,0}$ in
Eq.~\eqref{eq:bilocal-prefactor} denote this operator evaluated on the
charge-$n$ wormhole saddle and the fluxless reference background,
respectively. with the understanding that the compact scalar is implemented through the first-order scalar--two-form functional integral rather than treated as a real classical throat profile.

The bulk index density associated with Eq.~\eqref{eq:fermion-operator} vanishes on the spherical saddle because of Eq.~\eqref{eq:pontryagin-zero}.  Thus there is no bulk topological source for a net chiral index under standard asymptotically trivial boundary conditions.  This does not exclude paired or accidental zero modes, nor boundary spectral asymmetry.  If exact Grassmann zero modes remain unsaturated, the purely bosonic coefficient vanishes and the mouth operator requires fermion insertions.  When the bosonic vertex is allowed, the determinant changes its modulus and a single-mouth matrix element may carry a spectral phase.  In a charge-conjugate diagonal pair this phase cancels from the Hermitian bilocal coefficient, while relative phases between harmonics or entries of a non-diagonal kernel can remain physical.  A complex contour need not admit a positive Gaussian representation.

The fluctuation problem is sensitive to boundary conditions.  Analyses that impose fixed axion charge or fixed three-form flux have found no negative modes for the asymptotically flat saddle, whereas other choices led to different conclusions \cite{HertogTruijenVanRiet2019,LogesShiuSudhir2022,HertogEtAl2024,MarolfMissoni2025}.  We therefore retain the contour data in Eq.~\eqref{eq:bilocal-prefactor}.  Our later conclusions require only that a nonzero bosonic bilocal coefficient exists; they do not require an assumed closed-form value of the determinant.

For a positive diagonal kernel one may choose a complex number $\lambda_n$ of mass dimension four such that
\begin{equation}
 \lambda_n=|\lambda_n|\ee^{\ii\delta_n},
 \qquad
 |\lambda_n|^2=\Delta_n>0.
 \label{eq:lambda-definition}
\end{equation}
Only $|\lambda_n|$ is fixed by a strictly diagonal bilocal term; $\delta_n$ records the phase convention for a factorized mouth amplitude and becomes meaningful relative to other harmonics or non-diagonal entries.  The long-distance output of the saddle calculation is therefore encoded in the mouth operators
\begin{equation}
 \cO_n(x)=\lambda_n\ee^{\ii n\theta(x)},
 \qquad
 \cO_{-n}(x)=\cO_n(x)^\dagger,
 \label{eq:mouth-operators}
\end{equation}
rather than in a local cosine potential with a predetermined coefficient.

\section{Bilocal interaction and localization in \texorpdfstring{$\alpha$}{alpha} sectors}
\label{sec:bilocal}

\subsection{Dilute gas and the bilocal kernel}
\label{subsec:dilute-gas}

The spacetime-integrated charge-$n$ mouth insertion is defined as
\begin{equation}
 \cI_n[\theta]
 \equiv
 \int \dd^4x\sqrt g\,\ee^{\ii n\theta(x)},
 \qquad
 \cI_n^\dagger=\cI_{-n}.
 \label{eq:integrated-mouth}
\end{equation}
Integrating over the two mouth positions of a connected wormhole gives, at leading order in a derivative expansion,
\begin{equation}
 I_{\rm wh}
 =
 \sum_{n,m\geq1}
 \Delta_{nm}\,
 \cI_n\cI_m^\dagger,
 \label{eq:general-bilocal}
\end{equation}
where $I_{\rm wh}$ is the bilocal wormhole contribution to the Euclidean exponent, and $\Delta_{nm}$ is the kernel coupling the charge-$n$ and charge-$m$ mouth channels. The coefficient matrix collects the classical saddle contribution, the fluctuation determinants, and any correlations between different charge channels. For the constant-modulus spherical reference saddle, flux conservation makes this matrix diagonal in $n$, although we keep the more general matrix notation to allow for possible throat interactions or a broader operator basis.  In a dilute gas, disconnected wormholes exponentiate,
\begin{equation}
 Z
 =
 \int \cD g\,\cD\theta\,
 \exp\left[-S_0[g,\theta]+I_{\rm wh}[g,\theta]\right],
 \label{eq:dilute-partition}
\end{equation}
where $S_0[g,\theta]$ is the Euclidean parent-theory action in the absence of wormhole insertions, and $Z$ is the corresponding wormhole-corrected partition function. The approximation requires typical mouth separations to be large compared with the relevant throat radii and neglects interactions among wormholes.

The central state-dependence issue is already present for a diagonal kernel.  We therefore take
\begin{equation}
 I_{\rm wh}
 =
 \sum_{n\geq1}|\lambda_n|^2|\cI_n|^2,
 \label{eq:diagonal-bilocal}
\end{equation}
where Eq.~\eqref{eq:lambda-definition} has been used.  Higher-derivative mouth operators and non-diagonal charge correlations can be restored without changing the distinction between a bilocal kernel and a baby-universe state.

\subsection{Complex Hubbard--Stratonovich localization}
\label{subsec:hs-localization}

For each harmonic, the bilocal term can be localized by means of the exact complex Gaussian identity
\begin{equation}
 \ee^{|\lambda_n|^2|\cI_n|^2}
 =
 \int_{\mathbb C}\frac{\dd^2\alpha_n}{\pi}
 \exp\left[
 -|\alpha_n|^2
 +\lambda_n\alpha_n\cI_n
 +\lambda_n^*\alpha_n^*\cI_n^\dagger
 \right],
 \label{eq:complex-hs}
\end{equation}
where the auxiliary variables $\alpha_n$ are dimensionless. It is then convenient to combine $\alpha_n$ with the mouth coefficient $\lambda_n$ and define the dimension-four local quantity
\begin{equation}
 \beta_n\equiv\lambda_n\alpha_n
 =b_n\ee^{\ii\varphi_n},
 \qquad
 b_n\geq0.
 \label{eq:beta-definition}
\end{equation}
Thus, $b_n=|\beta_n|$, while its phase is
$\varphi_n=\arg\beta_n=\arg\lambda_n+\vartheta_n$ if we write
$\alpha_n=r_n\ee^{\ii\vartheta_n}$ with $r_n=|\alpha_n|$. We use
$\varphi_n$ to distinguish the phase of $\beta_n$ from the phase
$\delta_n$ of $\lambda_n$ introduced in Eq.~\eqref{eq:lambda-definition}.
In polar coordinates, the integration measure becomes
$\dd^2\alpha_n=r_n\dd r_n\dd\vartheta_n$; however, since the phase
$\vartheta_n$ is not defined at $r_n=0$, Cartesian coordinates are more
appropriate when the neighborhood of the symmetric point is considered. The wormhole-induced local Euclidean potential in a fixed $\bm\alpha$ sector, where $\bm\alpha\equiv\{\alpha_n\}_{n\geq1}$, is
\begin{align}
 V_{\rm wh}(\theta;\bm\alpha)
 &={}
 -\sum_{n\geq1}
 \left(\beta_n\ee^{\ii n\theta}
 +\beta_n^*\ee^{-\ii n\theta}\right)
 \nonumber\\
 &={}
 -2\sum_{n\geq1}b_n\cos(n\theta+\varphi_n).
 \label{eq:fixed-sector-potential}
\end{align}
The sign follows from the convention in Eq.~\eqref{eq:dilute-partition}: the localized source appears with a plus sign in the exponent and therefore with a minus sign in the Euclidean potential.  A local periodic potential is thus obtained only after the values of $\bm\alpha$ have been fixed or integrated with a specified measure.

For $N$ retained harmonics, let $\bm\cI=(\cI_1,\ldots,\cI_N)^{\mathsf T}$, $\bm\beta=(\beta_1,\ldots,\beta_N)^{\mathsf T}$, and $\Delta=(\Delta_{nm})$ be the positive Hermitian bilocal kernel. Equation~\eqref{eq:complex-hs} then generalizes to
\begin{equation}
 \ee^{\bm\cI^\dagger\Delta\bm\cI}
 =
 \frac{1}{\pi^N\det\Delta}
 \int\dd^{2N}\bm\beta\,
 \exp\left[
 -\bm\beta^\dagger\Delta^{-1}\bm\beta
 +\bm\beta^\dagger\bm\cI
 +\bm\cI^\dagger\bm\beta
 \right].
 \label{eq:matrix-hs}
\end{equation}
If $\Delta$ is not positive Hermitian, a complex integration contour is needed.  Such a contour defines an algebraic localization but not an ordinary probability distribution.

\subsection{Hubbard--Stratonovich measure versus baby-universe state}
\label{subsec:state-measure}

The Gaussian factor in Eq.~\eqref{eq:complex-hs} arises solely from the Hubbard--Stratonovich localization and should not be identified with the physical baby-universe state. In the discussion below, each complex parameter $\alpha_n=\alpha_{n,c}+\ii\alpha_{n,s}$ simply packages two real coordinates associated with the Hermitian cosine and sine combinations of the baby-universe operators, rather than representing the eigenvalue of a non-Hermitian operator. Once the normalization of these commuting Hermitian operators is fixed, we denote their simultaneous generalized eigenstates by $|\bm\alpha\rangle$.  For a density matrix $\rho_{\rm BU}$, the corresponding positive spectral measure is
\begin{equation}
 \dd\nu_{\rm BU}^{(0)}(\bm\alpha)
 =
 \dd\mu_{\rm spec}(\bm\alpha)\,
 \langle\bm\alpha|\rho_{\rm BU}|\bm\alpha\rangle,
 \qquad
 \dd\nu_{\rm BU}^{(0)}\geq0,
 \label{eq:bu-state-measure}
\end{equation}
where $\rho_{\rm BU}$ is the baby-universe density matrix, $\dd\mu_{\rm spec}$ is the joint spectral measure of the commuting baby-universe operators, and $\dd\nu_{\rm BU}^{(0)}$ is the state-induced measure before any parent-universe weighting. Neither the covariance in Eq.~\eqref{eq:matrix-hs} nor the connected wormhole saddle uniquely determines $\rho_{\rm BU}$, so a Gaussian choice for $\rho_{\rm BU}$ should be regarded as one possible baby-universe state rather than as a consequence of the Hubbard--Stratonovich transformation itself.

Parent-universe amplitudes may provide an additional weighting,
\begin{equation}
 Z[\rho_{\rm BU}]
 =
 \int\dd\nu_{\rm BU}^{(0)}(\bm\alpha)\,
 Z_{\rm parent}[\bm\alpha],
 \label{eq:parent-average}
\end{equation}
where $Z_{\rm parent}[\bm\alpha]$ is the parent-universe partition functional evaluated in a fixed $\bm\alpha$ sector, while $Z[\rho_{\rm BU}]$ denotes its average in the state $\rho_{\rm BU}$. Let $V_4$ denote the regulated Euclidean four-volume and $E_{\rm vac}(\bm\alpha)$ the vacuum-energy density in the fixed
$\bm\alpha$ sector. If a Euclidean prescription gives a real positive large-volume functional
\begin{equation}
 Z_{\rm parent}^{(E)}[\bm\alpha]
 \simeq
 \exp\left[-V_4E_{\rm vac}(\bm\alpha)\right],
 \label{eq:euclidean-parent}
\end{equation}
then a normalized effective measure can be defined by
\begin{equation}
 \dd\nu_{\rm eff}(\bm\alpha)
 =
 \frac{\ee^{-V_4E_{\rm vac}(\bm\alpha)}
 \dd\nu_{\rm BU}^{(0)}(\bm\alpha)}
 {\int\ee^{-V_4E_{\rm vac}}\dd\nu_{\rm BU}^{(0)}},
 \label{eq:effective-positive-measure}
\end{equation}
where $\dd\nu_{\rm eff}$ is the normalized measure obtained after including the parent-universe weighting.
A Lorentzian weighting $\exp[-\ii V_4E_{\rm vac}]$ is instead oscillatory and is governed by contour and stationary-phase data, which is not a probability measure.  In Ref.~\cite{Kawana2026}, the authors proposed a Lorentzian thermodynamic prescription in which the symmetric point can dominate for a class of Peccei--Quinn models, and although we do not assume this prescription here, our analysis gives an immediate spectral consequence when the resulting state is concentrated at, or has support approaching, $\bm\alpha=0$.

Equations~\eqref{eq:bu-state-measure}--\eqref{eq:effective-positive-measure} also make clear what is meant by an ensemble-level statement. An observer confined to a fixed superselection sector measures observables at a definite value of $\bm\alpha$, whereas averaging over $\bm\alpha$ refers instead to a mixed baby-universe state, an ensemble interpretation of the gravitational path integral, or simply incomplete information about the underlying sector. These possibilities are conceptually distinct and should only be compared after the relevant observable and the corresponding conditioning prescription have been specified.

\section{Conditional mass and the ensemble spectral gap}
\label{sec:gap}

\subsection{Mass in a fixed sector}
\label{subsec:fixed-mass}

Let $\theta_\star(\bm\alpha)$ denote a locally stable branch of minima of Eq.~\eqref{eq:fixed-sector-potential}, so that for the canonically normalized field $\phi=\mu\theta$ the corresponding conditional curvature mass is
\begin{equation}
 m_{\phi,\bm\alpha}^2
 =
 \left.
 \frac{1}{\mu^2}
 \frac{\partial^2V_{\rm wh}}{\partial\theta^2}
 \right|_{\theta=\theta_\star}
 =
 \frac{2}{\mu^2}
 \sum_{n\geq1}n^2b_n
 \cos\left(
 n\theta_\star+
 \varphi_n
 \right),
 \label{eq:general-conditional-mass}
\end{equation}
where local stability requires this quantity to be nonnegative. Throughout this section, $m_{\phi,\bm\alpha}$ denotes only the contribution generated by the wormhole harmonics, so any independent source of shift-symmetry breaking must be included separately in the sector inverse propagator; although momentum-dependent terms in the exact two-point function can shift the pole relative to this curvature mass, this distinction is irrelevant in the two-derivative quadratic approximation adopted below, where the two coincide.

For the unit harmonic, writing $\alpha=r\ee^{\ii\vartheta}$ and $\varphi=\arg\lambda_1+\vartheta$, the wormhole-induced potential becomes
\begin{equation}
 V_{\rm wh}(\theta;\alpha)
 =
 -2|\lambda_1|r\cos\left(\theta+\varphi\right),
 \label{eq:unit-potential}
\end{equation}
whose minimum lies at $\theta_\star=-\varphi$ modulo $2\pi$ and therefore gives the conditional mass
\begin{equation}
 m_{\phi,\alpha}^2
 =
 \gamma r,
 \qquad
 \gamma\equiv\frac{2|\lambda_1|}{\mu^2}.
 \label{eq:unit-mass}
\end{equation}
This relation makes the sector dependence explicit: a nonzero connected wormhole coefficient does not necessarily generate a nonzero mass in every sector. At the symmetric point $\alpha=0$, the local wormhole harmonic vanishes and hence
\begin{equation}
 m_{\phi,0}^{2}=0,
 \label{eq:symmetric-mass-zero}
\end{equation}
even though the anomaly coupling in Eq.~\eqref{eq:wz-term} remains present.

\subsection{Pushforward measure and a gap criterion}
\label{subsec:gap-criterion}

Taking $\dd\nu_{\rm eff}(\bm\alpha)$ to be a positive normalized measure, the distribution of the conditional mass squared is obtained by pushing this measure forward under the map $\bm\alpha\mapsto m_{\phi,\bm\alpha}^2$:
\begin{equation}
 P_{m^2}(x)
 =
 \int\dd\nu_{\rm eff}(\bm\alpha)\,
 \delta\left[x-m_{\phi,\bm\alpha}^2\right],
 \qquad x\geq0,
 \label{eq:mass-pushforward}
\end{equation}
where $x$ denotes the mass-squared spectral variable. If the pushforward measure is absolutely continuous, $P_{m^2}(x)$ can be interpreted as its probability density; otherwise, Eq.~\eqref{eq:mass-pushforward} is understood in the distributional sense and may also contain discrete atomic contributions. To exclude the additional covariance term generated by sector-dependent one-point functions, we subtract the one-point function in each sector before averaging the resulting connected correlators. Suppose that these correlators can be compared on a common background and take the quadratic form
\begin{align}
 G_{\bm\alpha,c}(p_E^2)
 &=
 \frac{Z_{\bm\alpha}}
 {p_E^2+m_{\phi,\bm\alpha}^2},
 \nonumber\\
 0<Z_{\bm\alpha}<\infty
 &\quad\text{almost everywhere},
 \qquad
 \int Z_{\bm\alpha}\dd\nu_{\rm eff}<\infty.
 \label{eq:sector-propagator}
\end{align}
Here $p_E$ is the Euclidean momentum, $Z_{\bm\alpha}$ is the pole residue in the fixed sector, and the subscript $c$ denotes a connected correlator. Their average is a Stieltjes transform,
\begin{equation}
 G_{{\rm ens},c}(p_E^2)
 =
 \int_0^\infty
 \frac{\dd\rho_{\rm ens}(x)}{p_E^2+x},
 \label{eq:ensemble-stieltjes}
\end{equation}
with positive spectral measure
\begin{equation}
 \dd\rho_{\rm ens}(x)
 =
 \int\dd\nu_{\rm eff}(\bm\alpha)\,
 Z_{\bm\alpha}\,
 \delta\left[x-m_{\phi,\bm\alpha}^2\right]\dd x.
 \label{eq:ensemble-spectral-measure}
\end{equation}

\begin{proposition}
\label{prop:gap}
Under the assumptions in Eqs.~\eqref{eq:sector-propagator} and \eqref{eq:ensemble-spectral-measure}, and provided $Z_{\bm\alpha}$ is strictly positive almost everywhere, the lower edge of the averaged spectral measure is
\begin{equation}
 m_{{\rm gap},{\rm ens}}^2
 \equiv
 \inf\operatorname{supp}\rho_{\rm ens}
 =
 \essinf_{\bm\alpha\sim\nu_{\rm eff}}
 m_{\phi,\bm\alpha}^2.
 \label{eq:gap-proposition}
\end{equation}
\end{proposition}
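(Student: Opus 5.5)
The plan is to read $\rho_{\rm ens}$ as the pushforward of the weighted measure $Z_{\bm\alpha}\,\dd\nu_{\rm eff}(\bm\alpha)$ under the map $f:\bm\alpha\mapsto m_{\phi,\bm\alpha}^2$, and then prove the equality by two inequalities. Concretely, for any Borel set $A\subset[0,\infty)$, Eq.~\eqref{eq:ensemble-spectral-measure} reads
\begin{equation*}
 \rho_{\rm ens}(A)=\int_{f^{-1}(A)}Z_{\bm\alpha}\,\dd\nu_{\rm eff}(\bm\alpha),
\end{equation*}
which is a finite positive measure because $\int Z_{\bm\alpha}\dd\nu_{\rm eff}<\infty$. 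I will need $f$ to be measurable. I would take this from Eq.~\eqref{eq:general-conditional-mass}, assuming a measurable selection of the stable branch $\theta_\star(\bm\alpha)$; for the unit harmonic this is explicit, since $f=\gamma r$ is continuous. Write $m_*^2\equiv\essinf_{\nu_{\rm eff}} f$.

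First, the lower bound $\inf\operatorname{supp}\rho_{\rm ens}\geq m_*^2$. By the definition of the essential infimum, $\nu_{\rm eff}(\{f<m_*^2\})=0$. Hence $\rho_{\rm ens}([0,m_*^2))=\int_{\{f<m_*^2\}}Z\,\dd\nu_{\rm eff}=0$, so the open set $[0,m_*^2)$ in the half-line lies outside the support. This direction does not use positivity of $Z$; it only uses the fact that $Z$ integrates to zero over a null set.

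Second, the upper bound. Fix $\epsilon>0$. By the definition of the essential infimum, $E_\epsilon=\{f<m_*^2+\epsilon\}$ has $\nu_{\rm eff}(E_\epsilon)>0$. This is the key step where strict positivity of $Z_{\bm\alpha}$ almost everywhere enters. A nonnegative function that is strictly positive a.e. has strictly positive integral over any set of positive measure: if $\int_{E_\epsilon}Z\,\dd\nu_{\rm eff}=0$, then $Z=0$ a.e. on $E_\epsilon$, which contradicts $Z>0$ a.e. Therefore $\rho_{\rm ens}([m_*^2,m_*^2+\epsilon))>0$ for every $\epsilon$. Since the support is closed and every neighborhood of $m_*^2$ carries mass (by the first step, points below $m_*^2$ do not contribute), $m_*^2\in\operatorname{supp}\rho_{\rm ens}$. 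Combining the two steps gives Eq.~\eqref{eq:gap-proposition}.

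I expect the main obstacle to be not this measure-theoretic core but the bookkeeping around it. One issue is making sure that the distributional expression in Eq.~\eqref{eq:ensemble-spectral-measure} is really the pushforward described above, including any atoms of $\nu_{\rm eff}$. The other is handling the degenerate case $\nu_{\rm eff}(\{Z=\infty\})$; this is excluded by the assumption $Z<\infty$ a.e., which together with integrability keeps $\rho_{\rm ens}$ a Radon measure so that its support is well defined. I would also remark that the Stieltjes representation in Eq.~\eqref{eq:ensemble-stieltjes} then shows $G_{{\rm ens},c}$ is analytic in $p_E^2$ off $(-\infty,-m_*^2]$, with singularities reaching $-m_*^2$. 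This connects the support statement to the threshold interpretation used later.
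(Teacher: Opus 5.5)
Your proposal is correct and is essentially the paper's argument. The set $\{m_{\phi,\bm\alpha}^2<b\}$ is $\nu_{\rm eff}$-null, so $\rho_{\rm ens}$ has no weight below $b$. Each set $\{m_{\phi,\bm\alpha}^2<b+\epsilon\}$ has positive $\nu_{\rm eff}$-measure, and a.e.\ positivity of $Z_{\bm\alpha}$ turns this into positive spectral weight on $[b,b+\epsilon)$. You also make explicit what the paper leaves implicit: the pushforward form of Eq.~\eqref{eq:ensemble-spectral-measure}, measurability of the mass map, and finiteness of $\rho_{\rm ens}$. The one step you state quickly, $\nu_{\rm eff}(\{f<m_*^2\})=0$, follows by countable additivity over the null sets $\{f<m_*^2-1/k\}$, which the paper sidesteps by working with $b-\epsilon$ for each $\epsilon$.
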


\begin{proof}
Let $b=\essinf m_{\phi,\bm\alpha}^2$.  For every $\epsilon>0$, the set $A_\epsilon=\{\bm\alpha:m_{\phi,\bm\alpha}^2<b+\epsilon\}$ has positive $\nu_{\rm eff}$ measure.  Since $Z_{\bm\alpha}>0$ almost everywhere, Eq.~\eqref{eq:ensemble-spectral-measure} assigns nonzero spectral weight to $[b,b+\epsilon)$.  Hence the support of $\rho_{\rm ens}$ accumulates at $b$.  Conversely, the set on which $m_{\phi,\bm\alpha}^2<b-\epsilon$ has zero measure for every $\epsilon>0$, so $\rho_{\rm ens}$ has no support below $b$.  Therefore $\inf\operatorname{supp}\rho_{\rm ens}=b$.
\end{proof}
For unit residue, $\dd\rho_{\rm ens}$ coincides with the pushforward measure represented by Eq.~\eqref{eq:mass-pushforward}; any atomic contributions are already included in its distributional definition.
An isolated pole at $x=m_*^2$ requires an atom $\dd\rho_{\rm ens}(x)\supset Z_*\delta(x-m_*^2)\dd x$. A continuous distribution instead gives a branch cut after analytic continuation.  The measure in Eq.~\eqref{eq:ensemble-spectral-measure} is the positive measure of an averaged Stieltjes transform; it need not be the K\"all\'en--Lehmann measure of a single quantum field theory with one Hilbert space.  Proposition~\ref{prop:gap} concerns the lower spectral threshold, not the mean mass.  In particular,
\begin{equation}
 \langle m_\phi^2\rangle
 =
 \int\dd\nu_{\rm eff}(\bm\alpha)\,
 m_{\phi,\bm\alpha}^2>0
 \quad\nRightarrow\quad
 m_{{\rm gap},{\rm ens}}>0.
 \label{eq:mean-not-gap}
\end{equation}

\begin{corollary}
\label{cor:support-zero}
For Eq.~\eqref{eq:unit-mass}, if the support of $\nu_{\rm eff}$ accumulates at $\alpha=0$, then
\begin{equation}
 m_{{\rm gap},{\rm ens}}=0.
 \label{eq:gapless-corollary}
\end{equation}
This holds even if $\nu_{\rm eff}(\{0\})=0$ and $m_{\phi,\alpha}>0$ almost surely.
\end{corollary}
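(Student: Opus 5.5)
The plan is to reduce Corollary~\ref{cor:support-zero} to Proposition~\ref{prop:gap} by showing that the essential infimum of $m_{\phi,\alpha}^2=\gamma r$ over $\nu_{\rm eff}$ vanishes. Since $m_{\phi,\alpha}^2\geq0$ for every $\alpha$ by Eq.~\eqref{eq:unit-mass}, the essential infimum is nonnegative. It therefore suffices to show that for every $\epsilon>0$ the set
\begin{equation*}
 A_\epsilon=\{\alpha:\gamma|\alpha|<\epsilon\}
\end{equation*}
has positive $\nu_{\rm eff}$ measure.

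First I will make the hypothesis precise. ``The support of $\nu_{\rm eff}$ accumulates at $\alpha=0$'' I read as $0\in\operatorname{supp}\nu_{\rm eff}$, where the support is the closed set of points all of whose open neighbourhoods carry positive measure. This holds either because $0$ is an atom or because $0$ is a limit point of the support; the latter case is the relevant one. By the definition of support, every open disc $D_\delta=\{|\alpha|<\delta\}$ then has $\nu_{\rm eff}(D_\delta)>0$. Taking $\delta=\epsilon/\gamma$ gives $A_\epsilon=D_{\epsilon/\gamma}$, so $\nu_{\rm eff}(A_\epsilon)>0$ for all $\epsilon>0$. Hence $\essinf m_{\phi,\alpha}^2=0$, and Proposition~\ref{prop:gap}, with its residue-positivity assumptions inherited, yields $m_{\rm gap,ens}^2=0$.

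One subtlety needs care. If the accumulation were meant only in the weaker sense that $0$ lies in the closure of points where $\nu_{\rm eff}$ is nonzero, that is exactly the support condition again, so no extra argument is required. I also need $\gamma>0$, which follows from $|\lambda_1|^2=\Delta_1>0$ in Eq.~\eqref{eq:lambda-definition}. If $\gamma$ were zero the conclusion would hold trivially anyway.

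The last step addresses the final sentence of the corollary. The argument above never used $\nu_{\rm eff}(\{0\})>0$. The sets $D_\delta$ can have positive measure while $\{0\}$ has measure zero, and on $D_\delta\setminus\{0\}$ one has $m_{\phi,\alpha}>0$. To show this case is genuinely realized, I would give an absolutely continuous example, for instance a Gaussian density in $\dd^2\alpha$, which assigns zero mass to $\{0\}$ but positive mass to every disc. For that example I would verify explicitly that $\rho_{\rm ens}$ has positive weight on every interval $[0,\epsilon)$ while having no atom at $x=0$.

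I expect the main obstacle to be purely interpretive: pinning down ``accumulates'' as membership of $0$ in the topological support. Once that is fixed, the proof is a one-line application of the definition of support together with the continuity of $\alpha\mapsto\gamma|\alpha|$ at $0$.
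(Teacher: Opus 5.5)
Your proposal is correct and follows the route the paper intends: the corollary comes straight from Proposition~\ref{prop:gap}. The hypothesis $0\in\operatorname{supp}\nu_{\rm eff}$ gives every disc $\{\gamma|\alpha|<\epsilon\}$ positive $\nu_{\rm eff}$ measure, so $\essinf_{\alpha\sim\nu_{\rm eff}}\gamma|\alpha|=0$ with no atom at $\alpha=0$ required, and the Gaussian witness you plan for the case $\nu_{\rm eff}(\{0\})=0$ is the example the paper works out in Sec.~\ref{subsec:gaussian-example}.
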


The gap functional is not continuous under weak convergence of measures. A simple example illustrates this subtlety: a sequence of full-support distributions can converge weakly to $\delta^{(2)}(\alpha-\alpha_\star)$ with $\alpha_\star\neq0$, even though every distribution in the sequence has a vanishing essential-infimum gap while the limiting delta measure has a nonzero one. This shows that the order in which the infinite-volume and spectral-support limits are taken must be specified explicitly.

For interacting sectors, the same argument applies after replacing $m_{\phi,\bm\alpha}^2$ by the lower edge of each sector's positive spectral measure, provided every neighborhood of that edge carries nonzero spectral weight.  If the gravitational weighting is complex, Eqs.~\eqref{eq:mass-pushforward}--\eqref{eq:gap-proposition} are not probability or positivity statements and a separate contour analysis is required.

\subsection{Universal behavior at a zero threshold}
\label{subsec:threshold-behavior}

The nonanalyticity of the averaged propagator is fixed by the amount of spectral weight near the symmetric point.  For unit residue, assume that the pushforward measure is absolutely continuous in a neighborhood of $x=0$, with density
\begin{equation}
 P_{m^2}(x)=A x^{\eta}+o(x^{\eta}),
 \qquad x\to0^+,
 \qquad A>0,
 \qquad \eta>0,
 \label{eq:threshold-density-general}
\end{equation}
where $A$ is the leading threshold coefficient and $\eta$ is the threshold exponent, and we also write $s\equiv p_E^2$ for the Euclidean momentum squared. For $0<\eta<1$, subtracting the finite zero-momentum value from Eq.~\eqref{eq:ensemble-stieltjes} gives
\begin{align}
 G_{{\rm ens},c}(s)-G_{{\rm ens},c}(0)
 &=-s\int_0^\infty
 \frac{P_{m^2}(x)}{x(x+s)}\dd x
 \nonumber\\
 &=-\frac{\pi A}{\sin(\pi\eta)}s^{\eta}
 +o(s^{\eta}),
 \qquad s\to0^+.
 \label{eq:fractional-threshold}
\end{align}
At the marginal value $\eta=1$, the corresponding result is
\begin{equation}
 G_{{\rm ens},c}(s)
 =G_{{\rm ens},c}(0)
 +A s\ln\left(\frac{s}{x_0}\right)
 +\cO(s),
 \label{eq:linear-threshold-log}
\end{equation}
where $x_0>0$ is a reference mass-squared scale and the analytic $\cO(s)$ term depends on the spectrum away from threshold.  Equations~\eqref{eq:fractional-threshold} and \eqref{eq:linear-threshold-log} follow by setting $x=st$ in the threshold part of the integral.  They show that a vanishing density at the origin does not restore an isolated pole. This behavior is generic for a smooth state on a $d$-dimensional real space of local coefficients, where $d$ denotes the number of independent real components. If the state density is nonzero at the origin and $m_{\phi,\bm\alpha}^2\propto r$
with $r=|\bm\alpha|$, then the radial Jacobian gives
\begin{equation}
 P_{m^2}(x)\propto x^{d-1}.
 \label{eq:dimension-threshold}
\end{equation}
Note that for a single complex harmonic, $d=2$, and therefore $\eta=1$, leading to the $s\ln s$ behavior displayed explicitly below.

\subsection{Gaussian baby-universe state}
\label{subsec:gaussian-example}

Consider the unit harmonic and the normalized rotationally invariant state
\begin{equation}
 \dd\nu_\sigma(\alpha)
 =
 \frac{\dd^2\alpha}{\pi\sigma^2}
 \exp\left(-\frac{|\alpha|^2}{\sigma^2}\right),
 \label{eq:gaussian-alpha}
\end{equation}
where $\sigma>0$ controls the width of the Gaussian measure in the complex $\alpha$ plane. The radial density is
\begin{equation}
 p_r(r)=\frac{2r}{\sigma^2}\ee^{-r^2/\sigma^2},
 \qquad r\geq0.
 \label{eq:rayleigh-radial}
\end{equation}
Define the mass-squared scale
\begin{equation}
 m_\sigma^2\equiv\gamma\sigma
 =\frac{2|\lambda_1|\sigma}{\mu^2}, 
 \label{eq:msigma-definition}
\end{equation}
and by using Eq.~\eqref{eq:unit-mass}, the pushforward density is
\begin{equation}
 P_{m^2}(x)
 =
 \frac{2x}{m_\sigma^4}
 \exp\left(-\frac{x^2}{m_\sigma^4}\right),
 \qquad x\geq0.
 \label{eq:gaussian-mass-density}
\end{equation}
It is normalized and has moments
\begin{equation}
 \langle(m_\phi^2)^k\rangle
 =
 (m_\sigma^2)^k
 \Gamma\left(1+\frac{k}{2}\right),
 \qquad k>-2,
 \label{eq:gaussian-moments}
\end{equation}
and in particular, one can write
\begin{equation}
 \langle m_\phi^2\rangle
 =\frac{\sqrt\pi}{2}m_\sigma^2,
 \qquad
 \operatorname{Var}(m_\phi^2)
 =m_\sigma^4\left(1-\frac{\pi}{4}\right).
 \label{eq:gaussian-mean-variance}
\end{equation}
The corresponding density for the mass itself is
\begin{equation}
 P_m(m)
 =
 \frac{4m^3}{m_\sigma^4}
 \exp\left(-\frac{m^4}{m_\sigma^4}\right),
 \qquad m\geq0.
 \label{eq:gaussian-mass-density-m}
\end{equation}
The mass distribution obtained from Eq.~\eqref{eq:gaussian-mass-density} is shown in Fig.~\ref{fig:mass-distribution}. Although almost every fixed sector is massive and the mean mass squared in Eq.~\eqref{eq:gaussian-mean-variance} is positive, the support of the distribution still extends down to zero. It then follows from Corollary~\ref{cor:support-zero} that $m_{{\rm gap},{\rm ens}}=0$. In addition, Eq.~\eqref{eq:gaussian-mass-density} gives $A=2/m_\sigma^4$ and $\eta=1$, which already determines the logarithmic behavior in Eq.~\eqref{eq:linear-threshold-log} before the full integral is evaluated.

\begin{figure}[t]
 \includegraphics[width=\columnwidth]{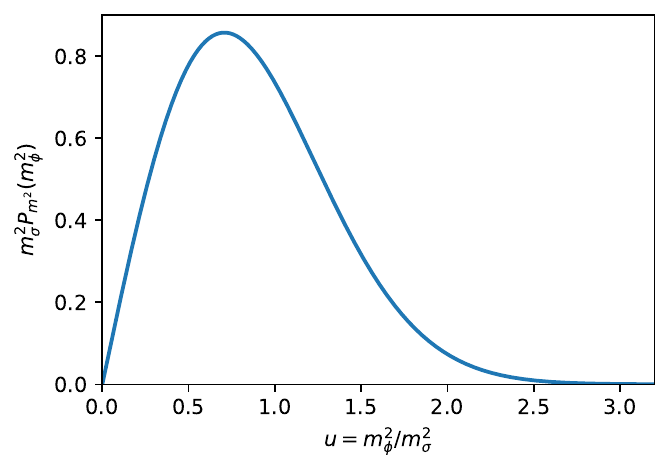}
 \caption{ Dimensionless pushforward density for the Gaussian state in Eq.~\eqref{eq:gaussian-alpha}.  With $u=m_\phi^2/m_\sigma^2$, the plotted density is
 $m_\sigma^2P_{m^2}(m_\sigma^2u)=2u\exp(-u^2)$. The density vanishes at the origin, but every neighborhood of the origin has nonzero measure.
 }
 \label{fig:mass-distribution}
\end{figure}

For unit residue, the averaged Euclidean propagator can also be evaluated exactly:
\begin{align}
 G_{{\rm ens},c}(s)
 &\equiv
 \int_0^\infty\dd x\,
 \frac{P_{m^2}(x)}{s+x}
 \nonumber\\
 &=
 \frac{1}{m_\sigma^2}\,
 \cG\left(\frac{s}{m_\sigma^2}\right),
 \qquad s=p_E^2.
 \label{eq:gaussian-propagator-definition}
\end{align}
and by writing $z\equiv s/m_\sigma^2>0$, the dimensionless function $\cG$ is
\begin{equation}
 \cG(z)
 =
 \sqrt\pi
 -z\ee^{-z^2}
 \left[
 \pi\operatorname{erfi}(z)-\operatorname{Ei}(z^2)
 \right],
 \label{eq:gaussian-propagator-exact}
\end{equation}
where $\operatorname{erfi}$ and $\operatorname{Ei}$ denote the imaginary error function and the exponential integral, respectively. For $s\to0^+$, one can obtain
\begin{equation}
 G_{{\rm ens},c}(s)
 =
 \frac{\sqrt\pi}{m_\sigma^2}
 +\frac{2s}{m_\sigma^4}
 \left[
 \ln\left(\frac{s}{m_\sigma^2}\right)
 +\frac{\gamma_E}{2}
 \right]
 +\cO\left(\frac{s^2}{m_\sigma^6}\right),
 \label{eq:gaussian-propagator-small-s}
\end{equation}
where $\gamma_E$ is the Euler--Mascheroni constant. The zero-momentum value is finite because the spectral density vanishes linearly at threshold, but the $s\ln s$ term is nonanalytic. After analytic continuation to Lorentzian momentum, Eq.~\eqref{eq:gaussian-propagator-exact} develops a branch cut starting at $p^2=0$ rather than an isolated massive pole, as illustrated by the momentum dependence shown in Fig.~\ref{fig:ensemble-propagator}; the integral leading to Eqs.~\eqref{eq:gaussian-propagator-exact} and \eqref{eq:gaussian-propagator-small-s} is given in Appendix~\ref{app:gaussian}.

\begin{figure}[t]
 \includegraphics[width=\columnwidth]{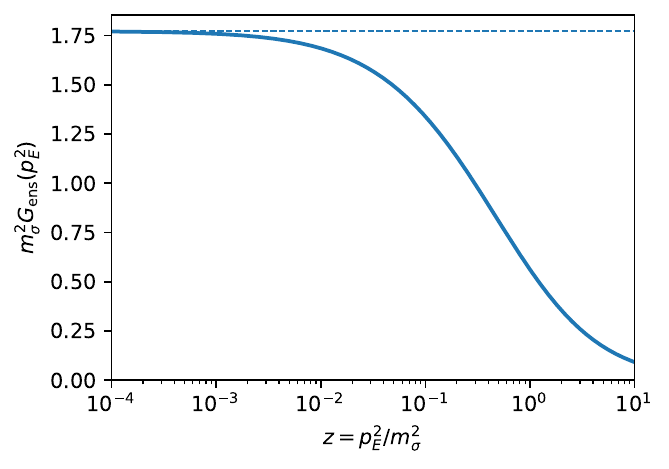}
 \caption{Exact dimensionless averaged propagator $m_\sigma^2G_{{\rm ens},c}$ as a function of $z=p_E^2/m_\sigma^2$.  The dashed line is the finite zero-momentum limit $\sqrt\pi$.  The nonanalytic approach to this limit is given by Eq.~\eqref{eq:gaussian-propagator-small-s}.}
 \label{fig:ensemble-propagator}
\end{figure}

\subsection{Three state-dependent interpretations}
\label{subsec:three-scenarios}

The discussion above leads to three physically distinct possibilities, depending on how the effective state is distributed in $\alpha$ space. The state may be concentrated at the symmetric point $\alpha=0$, localized at a nonzero value $\alpha_\star$, or spread over a range of $\alpha$ values. These three cases lead to qualitatively different interpretations of the wormhole-induced gravi-axion mass.

First, consider a state concentrated entirely at the symmetric point,
\begin{equation}
 \dd\nu_{\rm eff}
 =
 \delta^{(2)}(\alpha)\dd^2\alpha.
 \label{eq:scenario-zero}
\end{equation}
In this case, the selected sector has no wormhole-induced mass. This does not mean that the Euclidean wormhole saddle disappears from the calculation. The charged saddle and its bilocal coefficient are still present, but once the state is restricted to $\alpha=0$, the corresponding local harmonic vanishes. Therefore, if a state-selection prescription of the type discussed in Ref.~\cite{Kawana2026} selects the symmetric point, the wormhole does not generate a local mass term for the gravi-axion.

The second possibility is that the state is concentrated at an isolated nonzero point,
\begin{equation}
 \dd\nu_{\rm eff}
 =
 \delta^{(2)}(\alpha-\alpha_\star)\dd^2\alpha,
 \qquad
 \alpha_\star\neq0.
 \label{eq:scenario-nonzero}
\end{equation}
This corresponds to an ordinary fixed $\alpha$ sector. For the leading harmonic, the gravi-axion then acquires the definite mass
\begin{equation}
 m_{\phi,\star}^2
 =
 \gamma|\alpha_\star|,
\end{equation}
and the quadratic propagator contains a single massive pole. In this situation, the usual cosmological treatment of a massive gravi-axion can be applied, provided that the wormhole coefficient and the selected sector are both specified.

The third possibility is a broad state whose support extends over a range of $\alpha$ values. In this case, the state as a whole is not characterized by a single wormhole-induced mass. Each fixed sector still has its own conditional mass, but averaging over the sectors produces a distribution of masses, as in the Gaussian example of Eq.~\eqref{eq:gaussian-mass-density}. An observer confined to one superselection sector therefore measures one definite conditional mass, whereas the averaged correlator may contain a continuum of spectral weight and need not describe a single massive particle.

The same distinction also appears in cosmological observables. Let $\mathcal X(\alpha)$ denote a cosmological quantity evaluated in a fixed $\alpha$ sector. Its distribution over the effective state can be described by the pushforward measure
\begin{equation}
 P_{\mathcal X}(x)
 =
 \int\dd\nu_{\rm eff}(\alpha)\,
 \delta\left[
 x-\mathcal X(\alpha)
 \right].
 \label{eq:observable-pushforward}
\end{equation}
For example, in the case of misalignment production, one may take
\begin{equation}
 \mathcal X(\alpha)
 =
 \Omega_\phi h^2
 \left(
 m_{\phi,\alpha},
 \mu,
 \theta_i
 \right),
\end{equation}
where $\Omega_\phi$ is the present gravi-axion density fraction, $h$ is the reduced Hubble constant, and $\theta_i$ is the initial misalignment angle
\cite{PreskillWiseWilczek1983,AbbottSikivie1983,DineFischler1983,AlexanderEtAl2025}.

The cosmological interpretation therefore depends directly on which of these three situations is realized. If the state is concentrated at the symmetric point in Eq.~\eqref{eq:scenario-zero}, the usual onset condition $H\sim m_\phi$ associated with the wormhole-generated mass is absent, although anomaly-driven dynamics on parity-odd backgrounds may still remain. If the state is concentrated at a nonzero $\alpha_\star$, the standard fixed-mass picture is recovered. For a broad state, however, Eq.~\eqref{eq:observable-pushforward} describes an ensemble-level distribution of possible cosmological outcomes rather than the abundance measured by an observer who is already conditioned on a particular superselection sector.

\section{Discussion and conclusions}
\label{sec:conclusions}

The main point that emerges from our analysis is that the existence of a Euclidean axion wormhole does not, by itself, determine a unique mass for the gravi-axion. The semiclassical Giddings--Strominger solution shows that an integer three-form flux can support a smooth Euclidean throat, but the long-distance effect of this saddle is naturally described by a bilocal product of charged mouth operators. A local periodic potential appears only after additional information about the baby-universe state is supplied, either by fixing the $\alpha$ eigenvalues or by averaging over them with a specified state-dependent measure. The wormhole calculation therefore determines the bilocal kernel, together with its determinant corrections, but it does not select a unique point in $\alpha$ space.

The mixed gravitational anomaly does not change this conclusion for the $O(4)$ Giddings--Strominger saddle considered here. Since this geometry is conformally flat, its Pontryagin density vanishes, and the local $\phi\RtildeR$ interaction does not modify the classical wormhole exponent. This does not make the anomaly irrelevant. It still enters the fermionic sector through the determinant and its phase, and fermionic zero modes can impose selection rules on the allowed mouth operators. A complete one-loop calculation may therefore change the magnitude or phase of a bosonic harmonic, or even remove it altogether. Whenever a nonzero bosonic contribution survives, all of this information is encoded in the coefficient $\lambda_n$ used in our analysis.

Once a particular $\alpha$ sector is selected, the situation is straightforward. The wormhole-induced mass is then well defined and, for the leading harmonic, is proportional to $|\alpha|$, as shown in Eq.~\eqref{eq:unit-mass}. In this sense, it is more appropriate to speak of a conditional gravi-axion mass rather than a universal wormhole-generated mass. The usual phenomenology of a massive gravi-axion is recovered in a selected nonzero sector, but the numerical value of the mass remains conditional on the determinant, the local coefficient, and the choice of baby-universe state.

The situation changes when the state has support over more than one $\alpha$ sector. For a positive mixed state, Proposition~\ref{prop:gap} shows that the lower edge of the averaged spectral measure is determined by the essential infimum of the sector masses. This makes the behavior of the state near the symmetric point particularly important. If the support extends arbitrarily close to $\alpha=0$, the ensemble gap vanishes even when the point $\alpha=0$ itself has zero probability. The Gaussian example illustrates this clearly: almost every individual sector is massive and $\langle m_\phi^2\rangle$ is nonzero, while the averaged propagator nevertheless contains a continuum whose threshold begins at zero. More generally, the way in which the state approaches the symmetric point determines the nonanalytic threshold behavior through Eqs.~\eqref{eq:fractional-threshold} and \eqref{eq:linear-threshold-log}. Thus, the mean mass squared does not by itself determine either the position of a particle pole or the existence of a spectral gap.

There are several limitations that should be kept in mind when interpreting these results. The explicit saddle constructed in Sec.~\ref{sec:mouths} is the asymptotically flat, constant-modulus Einstein--three-form solution. Allowing the radial mode to backreact, or including higher-curvature operators, can modify both the throat geometry and the classical action. We have also kept the one-loop determinant in Eq.~\eqref{eq:bilocal-prefactor} in a parameterized form rather than assigning it an uncontrolled numerical value. In addition, the bilocal treatment relies on the dilute-gas approximation, while the analytic example was restricted to the diagonal unit-charge harmonic. Finally, the spectral argument assumes a positive measure, a common background geometry for the sectors being compared, sector-wise subtraction of one-point functions, and quadratic propagators with positive residues. If the gravitational weighting is complex, or if a Lorentzian stationary-phase prescription is used, a separate contour analysis is required before the measure can be given a probabilistic interpretation.

Subject to these assumptions, however, the central conclusion does not depend on the uncertain prefactor. A Euclidean wormhole may generate a nonzero bilocal interaction without generating a unique, state-independent mass for the gravi-axion. Depending on the baby-universe state, the theory may select a massless sector, a sector with a definite nonzero conditional mass, or a distribution of conditional masses whose averaged correlator is gapless. The semiclassical wormhole saddle alone is therefore not enough to decide which of these possibilities is realized. A prescription for the baby-universe state is an essential additional ingredient in any prediction of the gravi-axion mass and its cosmological consequences.

\appendix

\section{Euclidean scalar--two-form duality and the flux sum}
\label{app:duality}

For the integer-normalized three-form in Eq.~\eqref{eq:flux-quantization}, a convenient first-order Euclidean functional is
\begin{equation}
 S_1[H,\theta]
 =
 \frac{1}{2\mu^2}\int H\wedge\star H
 +\ii\int \dd\theta\wedge H.
 \label{eq:first-order-duality}
\end{equation}
The compact field $\theta\sim\theta+2\pi$ enforces both the local Bianchi identity and the global sum over integral fluxes.  Varying the noncompact part of $H$ gives
\begin{equation}
 H=-\ii\mu^2\star\dd\theta.
 \label{eq:euclidean-duality-relation}
\end{equation}
Substitution into Eq.~\eqref{eq:first-order-duality}, using $\star^2=-1$ on three-forms in four Euclidean dimensions, gives the positive scalar action
\begin{equation}
 S_E[\theta]
 =
 \frac{\mu^2}{2}\int\dd\theta\wedge\star\dd\theta.
 \label{eq:scalar-euclidean-action}
\end{equation}
The factor of $\ii$ in Eq.~\eqref{eq:euclidean-duality-relation} is the local indication that a real two-form saddle is not mapped to a real semiclassical scalar configuration.

The global equivalence is a Poisson resummation.  On a fixed geometry, let $k\in\mathbb Z$ label scalar winding between the two ends and let
$\eta=(\theta_+-\theta_-)/(2\pi)$.  The relevant theta-function identity is
\begin{equation}
 \sum_{k\in\mathbb Z}
 \ee^{-A(k+\eta)^2}
 =
 \sqrt{\frac{\pi}{A}}
 \sum_{n\in\mathbb Z}
 \ee^{-\pi^2n^2/A}
 \ee^{\ii n(\theta_+-\theta_-)}.
 \label{eq:poisson-resummation}
\end{equation}
The Fourier label $n$ is the quantized three-form flux.  Equation~\eqref{eq:poisson-resummation} shows directly that a charge-$n$ throat contributes
$\ee^{\ii n\theta_+}\ee^{-\ii n\theta_-}$ at long distances.  After integrating over the two mouth positions this is Eq.~\eqref{eq:general-bilocal}.  The Poisson resummation, rather than a pointwise classical duality, is therefore what produces the scalar mouth operators \cite{Witten2026}.

With the $2\pi$-normalized field strength
$H^{(2\pi)}=2\pi H$, Eq.~\eqref{eq:first-order-duality} becomes
\begin{equation}
 S_1
 =
 \frac{1}{2(2\pi)^2\mu^2}
 \int H^{(2\pi)}\wedge\star H^{(2\pi)}
 +\frac{\ii}{2\pi}\int\dd\theta\wedge H^{(2\pi)},
 \label{eq:two-pi-normalization}
\end{equation}
with $\int H^{(2\pi)}=2\pi n$.  This provides a direct conversion between common normalization conventions.

\section{On-shell action and boundary prescriptions}
\label{app:action}

For Eq.~\eqref{eq:o4-ansatz},
\begin{equation}
 H_{\mu\nu\rho}H^{\mu\nu\rho}
 =
 \frac{3n^2}{2\pi^4a^6}.
 \label{eq:h-squared}
\end{equation}
Using Eqs.~\eqref{eq:friedmann-wormhole} and \eqref{eq:wormhole-curvature}, the gravitational and matter bulk densities are equal on shell,
\begin{equation}
 -\frac{\mpl^2}{16\pi}R
 =
 \frac{1}{2\cdot3!\mu^2}H^2
 =
 \frac{3\mpl^2L_n^4}{8\pi a^6}.
 \label{eq:on-shell-density-equality}
\end{equation}
The full two-ended bulk action is therefore
\begin{align}
 S_{n,{\rm bulk}}^{(2)}
 &=
 \frac{3\pi\mpl^2L_n^4}{2}
 \int_{-\infty}^{\infty}\frac{\dd\tau}{a^3}
 \nonumber\\
 &=
 \frac{3\pi\mpl^2L_n^4}{2}
 \left[
 2\int_{L_n}^{\infty}
 \frac{\dd a}{a^3\sqrt{1-L_n^4/a^4}}
 \right].
 \label{eq:bulk-action-integral}
\end{align}
The radial integral is
\begin{equation}
 2\int_{L_n}^{\infty}
 \frac{\dd a}{a^3\sqrt{1-L_n^4/a^4}}
 =\frac{\pi}{2L_n^2},
 \label{eq:radial-integral}
\end{equation}
which gives Eq.~\eqref{eq:connected-action}.  For a large-radius boundary, the difference between the extrinsic curvature of the wormhole end and that of a flat reference sphere is $\cO(a^{-5})$.  Multiplication by the boundary volume $\cO(a^3)$ therefore gives a vanishing background-subtracted Gibbons--Hawking--York contribution at either asymptotic end.

A half geometry has bulk action
\begin{equation}
 S_{n,{\rm bulk}}^{(1)}
 =
 \frac{3\pi^2}{8}\mpl^2L_n^2
 =
 \frac{\sqrt{3\pi}}{8}\frac{|n|\mpl}{\mu}.
 \label{eq:half-bulk-action}
\end{equation}
If the throat is treated as an actual boundary, the chosen variational problem requires an internal Gibbons--Hawking--York term.  With the flat reference subtraction used in Ref.~\cite{AlonsoUrbano2019},
\begin{equation}
 S_{{\rm GHY},{\rm throat}}
 =
 -\frac{3\pi}{4}\mpl^2L_n^2,
 \label{eq:throat-ghy}
\end{equation}
so that
\begin{align}
 S_{n,{\rm cut}}^{(0)}
 &=S_{n,{\rm bulk}}^{(1)}+S_{{\rm GHY},{\rm throat}}
 \nonumber\\
 &=
 \frac{3\pi^2}{8}\mpl^2L_n^2
 \left(1-\frac{2}{\pi}\right),
 \label{eq:cut-action-derived}
\end{align}
which is Eq.~\eqref{eq:cut-action}.  The smooth connected manifold has no boundary at the throat, so Eq.~\eqref{eq:throat-ghy} is not included in Eq.~\eqref{eq:connected-action}.  This is why the connected action cannot be obtained by simply doubling the cut action.

\section{Gaussian pushforward and exact propagator}
\label{app:gaussian}

For Eq.~\eqref{eq:gaussian-alpha}, polar integration gives
\begin{equation}
 \int\dd\nu_\sigma F(|\alpha|)
 =
 \int_0^\infty\dd r\,
 \frac{2r}{\sigma^2}\ee^{-r^2/\sigma^2}F(r).
 \label{eq:gaussian-polar-integral}
\end{equation}
With $x=\gamma r$, the Jacobian $\dd r=\dd x/\gamma$ yields Eq.~\eqref{eq:gaussian-mass-density}.  The moments follow from
\begin{equation}
 \int_0^\infty\dd u\,2u^{k+1}\ee^{-u^2}
 =\Gamma\left(1+\frac{k}{2}\right).
 \label{eq:moment-integral}
\end{equation}

For the propagator, set $x=m_\sigma^2u$ and $s=m_\sigma^2z$.  Then
\begin{equation}
 m_\sigma^2G_{{\rm ens},c}(s)
 =
 2\int_0^\infty\dd u\,
 \frac{u\ee^{-u^2}}{u+z}.
 \label{eq:dimensionless-propagator-integral}
\end{equation}
Using $u/(u+z)=1-z/(u+z)$ and, for $z>0$,
\begin{equation}
 \int_0^\infty\frac{\ee^{-u^2}}{u+z}\dd u
 =
 \frac{\ee^{-z^2}}{2}
 \left[
 \pi\operatorname{erfi}(z)-\operatorname{Ei}(z^2)
 \right],
 \label{eq:auxiliary-integral}
\end{equation}
one obtains Eq.~\eqref{eq:gaussian-propagator-exact}.  The expansions
\begin{align}
 \operatorname{Ei}(z^2)
 &=\gamma_E+2\ln z+z^2+\cO(z^4),
 \nonumber\\
 \operatorname{erfi}(z)
 &=\frac{2z}{\sqrt\pi}+\cO(z^3).
 \label{eq:special-function-expansions}
\end{align}
give Eq.~\eqref{eq:gaussian-propagator-small-s}.  At large Euclidean momentum, the ordinary moment expansion is recovered,
\begin{equation}
 G_{{\rm ens},c}(s)
 =
 \frac{1}{s}
 -\frac{\langle m_\phi^2\rangle}{s^2}
 +\cO(s^{-3}).
 \label{eq:large-momentum-expansion}
\end{equation}

% Bibliography

%% [A] Recommended: using JHEP.bst file
\bibliographystyle{JHEP}
\bibliography{biblio.bib}

%% or
%% [B] Manual formatting (see below)
%% (i) We suggest to always provide author, title and journal data or doi:
%% in short all the informations that clearly identify a document.
%% (ii) please avoid comments such as "For a review'', "For some examples",
%% "and references therein" or move them in the text. In general, please leave only references in the bibliography and move all
%% accessory text in footnotes.
%% (iii) Also, please have only one work for each \bibitem.

\end{document}